\newcommand{\bPP}[1]{{\mathtt{P}}_{#1}}
\newcommand{\tP}[1]{{\tilde{\mathtt{P}}}_{#1}}
\newcommand{\bP}[2]{{\mathtt{P}}_{#1}\left({#2}\right)}
\newcommand{\cX}{{\mathcal X}}
\newcommand{\cY}{{\mathcal Y}}
\newcommand{\cS}{{\mathcal S}}
\newcommand{\cU}{{\mathcal U}}
\newcommand{\cM}{{\mathcal M}}
\newcommand{\cP}{{\mathcal P}}
\newcommand{\cT}{{\mathcal T}}
\newcommand{\cV}{{\mathcal V}}
\newcommand{\ep}{\epsilon}
\newcommand{\bs}{\mathbf{s}}
\newtheorem{theorem}{Theorem}
\newtheorem{lemma}[theorem]{Lemma}
\theoremstyle{remark}
\theoremstyle{remark}
\newtheorem{remark}{Remark}
\theoremstyle{definition}
\newtheorem{definition}{Definition}
\begin{document}

\title{The Gelfand-Pinsker Channel: Strong Converse and Upper Bound for the Reliability Function}
\author{
\authorblockN{Himanshu Tyagi}
\authorblockA{Dept. of Electrical and Computer Engineering \\
        and \\
        Institute for Systems Research \\
        University of Maryland \\
        College Park, MD 20742, USA \\
        Email: tyagi@umd.edu} \and
\authorblockN{Prakash Narayan}
\authorblockA{Dept. of Electrical and Computer Engineering \\
        and \\
        Institute for Systems Research \\
        University of Maryland \\
        College Park, MD 20742, USA \\
        Email: prakash@umd.edu}}


\maketitle

\begin{abstract}
We consider a Gelfand-Pinsker discrete memoryless channel (DMC) model and provide
a strong converse for its capacity. The strong converse is then used to obtain an upper bound
on the reliability function. Instrumental in our proofs is a new technical lemma  which provides
an upper bound for the rate of codes with codewords that are conditionally typical over
large {\em message dependent subsets} of a typical set of state sequences. This
technical result is a nonstraightforward analog of a known result for a DMC without states
that provides an upper bound on the rate of a good code
with codewords of a fixed type (to be found in, for instance, the Csisz\'ar-K\"orner book).
\end{abstract}
\section{Introduction}
We consider a state dependent discrete memoryless channel (DMC),
in which the underlying state process is independent and identically distributed (i.i.d.)
with known probability mass function (pmf). The transmitter is provided access at the
outset to the entire state sequence prevailing during the transmission of a codeword.
The capacity of this DMC with noncausal channel state information (CSI) at the transmitter
was determined in \cite{GelPin80}. Known popularly as the Gelfand-Pinsker
channel, it has been widely studied for a broad range of applications
which include fingerprinting, watermarking, broadcast communication, etc.

In this paper, we are concerned with the {\it strong converse} for
this channel as well as its {\it reliability function}, i.e., the
largest exponential rate of decay, with block codeword length, of
the decoding error probability. Even for a DMC without states, the
reliability function is not fully characterized for all rates
below channel capacity. Our main contributions are the following.
First, we provide a strong converse for the capacity of the
Gelfand-Pinsker DMC model, that is of independent interest.
Second, using this strong converse, we obtain an upper bound for
the reliability function; the later constitutes a line of attack
described earlier (see, for instance, \cite{CsiKor81}).
Instrumental in the proofs of both is a new technical result which
provides an upper bound on the rate of codes with codewords that
are conditionally typical over large {\em message dependent}
subsets of a typical set of state sequences. This technical result
is a nonstraightforward analog of \cite[Lemma 2.1.4]{CsiKor81} for
a DMC without states; the latter provides a bound on the rate of a
good code with codewords of a fixed type.

\section{Preliminaries}
Consider a state dependent DMC $W: \cX \times \cS \rightarrow \cY$
with finite input, state and output alphabets
$\cX$, $\cS$ and $\cY$, respectively. The $\cS$-valued state process
$\{S_t\}_{t=1}^\infty$ is i.i.d. with known pmf $\bPP{S}$. The probability law
of the DMC is specified by
\begin{align}\nonumber
W^n(\mathbf{y}\mid \mathbf{x}, \bs) =&  \prod_{t=1}^nW(y_t\mid x_t, s_t),\\\nonumber&\mathbf{x}\in \cX^n\,, \bs\in \cS^n,\, \mathbf{y}\in \cY^n.
\end{align}
We consider the Gelfand-Pinsker model \cite{GelPin80} in which
the encoder possesses perfect CSI in a noncausal manner, i.e., the entire
state sequence prior to transmission. A $(M,n)$-code is a pair of
mappings $(f,\phi)$ where the encoder $f$ is a mapping
\begin{align}\nonumber
f:\cM \times \cS^n \rightarrow \cX^n
\end{align}
with $\cM = \{1,\dots,M\}$ being the set of messages, while the
decoder $\phi$ is a mapping
\begin{align}\nonumber
\phi: \cY^n \rightarrow \cM.
\end{align}
The rate of the code is $(1/n)\log{M}$.
The corresponding (maximum) probability of error is
\begin{align}\label{e2}\nonumber
e(f,\phi) =& \max_{m\in \cM}  \sum_{\bs\in \cS^n} \bPP{S}({\bs})\times\\&W^n((\phi^{-1}(m))^c\mid f(m,\bs),\bs)
\end{align}
where $\phi^{-1}(m) = \{\mathbf{y}\in \cY^n : \phi(\mathbf{y}) = m\}$ and $(\cdot)^c$
denotes complement. 

We restrict ourselves to the situation where the receiver has no CSI. When the receiver,
too, has (full) CSI, our results apply in a standard manner by considering an associated
DMC with augmented output alphabet $\cY \times \cS$.
\begin{definition}
Given $0<\ep<1$, a number $R>0$ is $\ep$-achievable if for every $\delta>0$
and for all $n$ sufficiently large, there exist $(M,n)$-codes $(f,\phi)$
with $(1/n)\log{M} > R - \delta$ and $e(f,\phi) < \ep$; $R$ is an
achievable rate if it is $\ep$-achievable for all $0<\ep<1$. The
supremum of all achievable rates is the capacity $C$ of DMC.
\end{definition}
For a random variable $U$ with values in a finite set $\cU$, let
$\cP$ denote the set of all pmfs $\bPP{USXY}$ on
$\cU\times\cS\times\cX\times\cY$ with
\begin{align}\label{e3}
X=h(U,S)
\end{align}
for some mapping $h$,
\begin{align}\label{e4}
U -\!\!\circ\!\!- &S,X -\!\!\circ\!\!- Y,\\P_{Y\mid X,S}&=W.
\end{align}
As is well-known \cite{GelPin80}
\begin{align}\nonumber
C = \max_{\cP }I(U\wedge Y) - I(U\wedge S).
\end{align}
When the receiver, too, has (full) CSI it is known \cite{Wol78} that
\begin{align}\nonumber
C = \max_{P_{X\mid S}} I(X\wedge Y\mid S).
\end{align}
\begin{definition}
The {\it reliability function} $E(R), \ R\geq 0$, of the DMC $W$
with noncausal CSI,
is the largest number $E\geq 0$ such that for every $\delta >0$ and
for all sufficiently large $n$, there exist $n$-length block
codes $(f,\phi)$ as above of rate greater than $R - \delta$ and
$e(f,\phi) \leq \exp{[ -n(E-\delta)] }$ (see for instance \cite{CsiKor81}).
\end{definition}

For a given pmf $\tP{SX}$ on $\cS \times \cX$, denote by
$\cP(\tP{SX},W)$ the subset of $\cP$ with $\bPP{SX} = \tP{SX}$.
\section{Statement of Results}
An upper bound for the reliability function $E(R)$, $0<R<C$,
of a DMC without states, is derived in \cite{CsiKor81} using a strong
converse for codes with codewords of a fixed type. For a state dependent
DMC with {\em causal} CSI at the transmitter and no receiver CSI, a strong converse
is given in \cite{Wol78}. An analogous result is not available for the case of
noncausal transmitter CSI. For the latter situation, the following
key lemma serves, in effect, as an analog of \cite[Corollary 2.1.4]{CsiKor81}
and gives an upper bound on the rate of codes with codewords that
are conditionally typical over large {\it message dependent} subsets
of the typical set of state sequences. We note that a direct extension
of \cite[Corollary 2.1.4]{CsiKor81} would have entailed a claim
over a subset of typical state sequences {\it not depending} on
the transmitted message; however, its validity is unclear.

For a DMC without states, the result in \cite[Corollary
2.1.4]{CsiKor81} provides, in effect, an image size
characterization of a good codeword set;  this does not involve
any auxiliary rv. In the same spirit, our key technical lemma
below provides an image size characterization for good codeword
sets for the noncausal DMC model, which now involves an auxiliary
rv.

\begin{lemma}\label{l_KL}
Let $\ep,\tau>0$ be such that $\ep + \tau<1$. Given a pmf $\tP{S}$
on $S$ and conditional pmf ${\tilde{P}}_{X\mid S}$, let $(f,\phi)$
be a $(M,n)$-code as above. For each $m \in \cM$, let $A(m)$
be a subset of $\cS^n$ which satisfies the following conditions
\begin{align}\label{e6}
A(m) \subseteq \cT^n_{[\tP{S}]},\\\label{e7} \|A(m)\| \geq
\exp{\left[n\left(H(\tP{S}) -
\frac{\tau}{6}\right)\right]},\\\label{e8} f(m,\bs) \in
\cT^n_{[\tP{X\mid S}]}(\bs),\,\,\,\,\,\,\,\, \bs \in A(m).
\end{align}
Furthermore, let $(f,\phi)$ satisfy one of the following two conditions
\begin{subequations}
\begin{align}\label{e9a}
W^n(\phi^{-1}(m)\mid f(m,\bs), \bs)&\geq 1- \ep,\,\,\,\,\,\,\,\, \bs\in A(m),
\\ \nonumber
\frac{1}{\|A(m)\|} \sum_{\bs\in A(m)}W^n(\phi^{-1}(m)&\mid f(m,\bs), \bs) \\\label{e9b}&\geq 1- \ep.
\end{align}
\end{subequations}
Then, for\footnote{In our assertions,
we indicate the validity of a statement ``for all
$n \geq N(.)$'' by showing the explicit dependency of $N$; however the
standard picking of the ``largest such $N$'' from (finitely-many)
such $N$s is not indicated.}
$n \geq N(\|\cX\|,\|\cS\|,\|\cY\|,\tau,\ep)$, it holds that
\begin{align}\nonumber
\frac{1}{n}\log{M} \leq I(U \wedge Y) - I(U \wedge S)
\end{align}
where $\bPP{USXY} \in \mathcal{P}(\tP{S}\tP{X\mid S},W)$.
\end{lemma}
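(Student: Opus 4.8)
The plan is to reduce the assertion to an image-size characterization for the message-dependent ``codebooks'' $\{(f(m,\bs),\bs):\bs\in A(m)\}$, viewed as common inputs to $W$ seen as a DMC $\cX\times\cS\to\cY$, and then to combine that estimate with a counting bound over the pairwise disjoint decoding sets $\phi^{-1}(m)$. This is the state-dependent analogue of the route in \cite{CsiKor81} from \cite[Lemma~2.1.4]{CsiKor81} to a strong converse for constant-composition codes; the new feature is that an auxiliary rv must enter.

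First I would reduce to condition \eqref{e9b}, since \eqref{e9a} implies it by averaging over $A(m)$. Set $\tP{SXY}=\tP{S}\tP{X\mid S}W$ and let $\tP{Y}$ be its $\cY$-marginal. By \eqref{e8}, every $\bs\in A(m)$ has $(f(m,\bs),\bs)\in\cT^n_{[\tP{SX}]}$, so by conditional typicality the channel output falls, with $W^n(\cdot\mid f(m,\bs),\bs)$-probability $1-o(1)$, into the set $\mathcal{G}(m,\bs)$ of $\mathbf{y}\in\cY^n$ with $(f(m,\bs),\bs,\mathbf{y})\in\cT^n_{[\tP{SXY}]}$, and $\mathcal{G}(m,\bs)$ has $W^n(\cdot\mid f(m,\bs),\bs)$-mass spread over $\doteq\exp[nH(Y\mid X,S)]$ outputs lying in $\cT^n_{[\tP{Y}]}$. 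Combining this with \eqref{e9b} through a reverse-Markov (double-counting) argument, and using $\ep+\tau<1$ to leave room, for $n$ large I would extract $A'(m)\subseteq A(m)$ with $\|A'(m)\|\ge\exp[n(H(\tP{S})-\tau/4)]$ such that, for every $\bs\in A'(m)$, the set $\phi^{-1}(m)\cap\mathcal{G}(m,\bs)$ carries $W^n(\cdot\mid f(m,\bs),\bs)$-mass at least $\eta:=1-\ep-\tau/4-o(1)>0$. Thus $\phi^{-1}(m)$ is an $\eta$-image through $W$ of the set $B_m:=\{(f(m,\bs),\bs):\bs\in A'(m)\}\subseteq\cT^n_{[\tP{SX}]}$, with $\|B_m\|=\|A'(m)\|$.

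The heart of the proof is then an image-size characterization for $B_m$. Its composition is $\tP{SX}$, and its projection onto $\cS^n$ is, up to a $\tau/4$-fraction of the exponent, all of $\cT^n_{[\tP{S}]}$, so it carries essentially all the ``binning'' randomness available at composition $\tP{SX}$; this is exactly the content of condition \eqref{e7}. I would invoke the Csisz\'ar--K\"orner machinery --- the image-size characterization together with the blowing-up lemma, the latter both upgrading the $\eta$-image condition toward certainty and, crucially, supplying the strong-converse strength (no $(1-\ep)^{-1}$ loss) while making the estimate insensitive to the $\tau/6$ exponent deficit in \eqref{e7} --- to conclude that every $\eta$-image of $B_m$ through $W$ has cardinality at least $\exp[n(H(Y\mid U)+I(U\wedge S)-o(1))]$ for a suitable $\bPP{USXY}\in\mathcal{P}(\tP{S}\tP{X\mid S},W)$, where the auxiliary $U$ records the bin structure that the characterization extracts from $B_m$ (heuristically, $B_m$ is a union of $\doteq\exp[nI(U\wedge S)]$ conditional $\tP{SX\mid U}$-classes, each forcing its own $\doteq\exp[nH(Y\mid U)]$-sized output image, and these images are sufficiently distinct, because the states in $A'(m)$ span $\cT^n_{[\tP{S}]}$, that their union cannot be compressed below $\exp[n(H(Y\mid U)+I(U\wedge S))]$). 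Hence $\tfrac1n\log\|\phi^{-1}(m)\|\ge H(Y\mid U)+I(U\wedge S)-o(1)$.

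Finally I would run the counting bound: the sets $\phi^{-1}(m)$, $m\in\cM$, are pairwise disjoint and (essentially) contained in $\cT^n_{[\tP{Y}]}$, whose cardinality is at most $\exp[n(H(\tP{Y})+o(1))]$, so $M\le\exp[n(H(\tP{Y})-H(Y\mid U)-I(U\wedge S)+o(1))]$, i.e. $\tfrac1n\log M\le I(U\wedge Y)-I(U\wedge S)+o(1)$ with $\bPP{USXY}\in\mathcal{P}(\tP{S}\tP{X\mid S},W)$ (using that the $\cY$-marginal of any such pmf is $\tP{Y}$). Taking $n$ above a threshold depending only on $\|\cX\|,\|\cS\|,\|\cY\|,\tau,\ep$ absorbs the $o(1)$ and yields the claim. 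The main obstacle is the image-size step: unlike the stateless case of \cite[Lemma~2.1.4]{CsiKor81}, which handles a single codeword of fixed type, here one must characterize the common $\eta$-image of a message-dependent \emph{union} of conditional type classes and, in doing so, produce the single-letter auxiliary $U$ with the clean exponent $H(Y\mid U)+I(U\wedge S)$ --- and this is precisely the ``nonstraightforward'' ingredient, since a naive version working with a message-independent subset of $\cT^n_{[\tP{S}]}$ would not deliver it; everything else is bookkeeping with the method of types.
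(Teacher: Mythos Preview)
Your overall architecture matches the paper's: intersect the decoding regions with the typical output set, bound $M$ by counting over the disjoint sets $\phi^{-1}(m)\cap\cT^n_{[\tP{Y}]}$, and reduce everything to a lower bound on the minimum $\eta$-image size $g_{W^n}(B(m_0),\eta)$ of $B(m_0)=\{(f(m_0,\bs),\bs):\bs\in A(m_0)\}$ through $W$. Where you leave a genuine gap is precisely the step you yourself flag as the ``nonstraightforward ingredient'': you assert that the Csisz\'ar--K\"orner machinery yields $\tfrac1n\log g_{W^n}(B_m,\eta)\ge H(Y\mid U)+I(U\wedge S)-o(1)$ for some $\bPP{USXY}\in\cP(\tP{SX},W)$, but you do not say how the auxiliary $U$ is produced or why the term $I(U\wedge S)$ appears with that sign. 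Your heuristic (that $B_m$ splits into $\doteq\exp[nI(U\wedge S)]$ conditional classes with nearly disjoint images) is not a proof, and a one-channel image-size theorem applied to $W:\cX\times\cS\to\cY$ alone does not deliver it.

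The paper closes this gap with a concrete device you are missing: it introduces a second, deterministic channel $V:\cX\times\cS\to\cS$ given by $V(s'\mid x,s)=\mathbf{1}(s'=s)$ and applies the \emph{two-channel} image-size characterization \cite[Theorem~3.3.11]{CsiKor81} to the pair $(V,W)$ with common input set $B(m_0)$. This yields a single auxiliary $U$ and a slack $t\in[0,\min\{I(U\wedge S),I(U\wedge Y)\}]$ with
\[
\tfrac1n\log g_{W^n}(B(m_0),\tau)\approx H(Y\mid U)+t,\qquad
\tfrac1n\log g_{V^n}(B(m_0),\tau)\approx H(S\mid U)+t.
\]
Because $V$ simply reads off the state, one has $g_{V^n}(B(m_0),\tau)=\|A(m_0)\|$, so hypothesis~\eqref{e7} forces $H(S\mid U)+t\gtrsim H(\tP{S})$, i.e.\ $t\gtrsim I(U\wedge S)$; substituting into the counting bound gives $\tfrac1n\log M\le I(U\wedge Y)-I(U\wedge S)+\tau$. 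In short, your plan is correct but incomplete: the missing idea is to couple $W$ with the projection-to-state channel $V$, so that the largeness assumption on $A(m)$ becomes an image-size constraint that the characterization theorem can trade off against $g_{W^n}$, and this is what manufactures the auxiliary $U$ with the clean exponent you want.
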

\vspace{0.5cm}
This lemma plays an instrumental role in proving the following
two main results.

\begin{theorem}\label{t_SC}
{\it(Strong Converse)} Given $0<\ep<1$ and a sequence of $(M_n,n)$
codes $(f_n,\phi_n)$ with $e(f_n,\phi_n) < \ep$, it holds that
\begin{align}\nonumber
\limsup_n \frac{1}{n} \log{M_n} \leq C.
\end{align}
\end{theorem}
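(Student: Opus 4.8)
The plan is to derive Theorem~\ref{t_SC} from Lemma~\ref{l_KL} by passing, inside an arbitrary good code, to a subcode that meets the hypotheses of that lemma, at the cost of only a subexponential factor in the number of messages. Given $0<\ep<1$ and a sequence of $(M_n,n)$-codes $(f_n,\phi_n)$ with $e(f_n,\phi_n)<\ep$, I would first set $\ep^{\ast}=(1+\ep)/2$ and pick any $\tau$ with $0<\tau<(1-\ep)/2$, so that $\ep^{\ast}+\tau<1$, as the lemma demands. Take $\tP{S}=\bPP{S}$, the true state pmf, and let $\cT=\cT^n_{[\bPP{S}]}$ be its typical set, with typicality parameter $\delta$ to be fixed small; recall that $\bPP{S}(\cT)\ge 1-\eta_n$ with $\eta_n\to 0$, and that $\bPP{S}(\bs)\le\exp[-n(H(\bPP{S})-\delta')]$ for every $\bs\in\cT$, where $\delta'\downarrow 0$ as $\delta\downarrow 0$. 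I would fix $\delta$ small enough that $\delta'<\tau/6$.

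Next I would construct the message-dependent sets. Set $\beta_n=(1-\ep)/(1+\ep)-\eta_n$, which is bounded below by a positive constant for all large $n$ since $\ep<1$. For a fixed $m\in\cM$, Markov's inequality applied to the nonnegative function $\bs\mapsto W^n(\phi^{-1}(m)^c\mid f_n(m,\bs),\bs)$ under $\bs\sim\bPP{S}$, whose mean is at most $e(f_n,\phi_n)<\ep<\ep^{\ast}$, shows that $D(m):=\{\bs\in\cT:\ W^n(\phi^{-1}(m)\mid f_n(m,\bs),\bs)\ge 1-\ep^{\ast}\}$ satisfies $\bPP{S}(D(m))\ge\beta_n$. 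Now partition $D(m)$ by the joint type on $\cS\times\cX$ of $(\bs,f_n(m,\bs))$; there being at most $(n+1)^{\|\cS\|\|\cX\|}$ such types, one of them furnishes a subset $A(m)\subseteq D(m)$ with $\bPP{S}(A(m))\ge\beta_n(n+1)^{-\|\cS\|\|\cX\|}$, and let $V(m)$ be the $X$-given-$S$ part of that joint type. Since $A(m)\subseteq\cT$, each $\bs\in A(m)$ has $\bPP{S}(\bs)\le\exp[-n(H(\bPP{S})-\delta')]$, hence $\|A(m)\|\ge\bPP{S}(A(m))\exp[n(H(\bPP{S})-\delta')]\ge\exp[n(H(\bPP{S})-\tau/6)]$ for all large $n$, because $\delta'<\tau/6$. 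Thus $A(m)$ satisfies \eqref{e6}, \eqref{e7} (with $\tP{S}=\bPP{S}$) and \eqref{e8} (with $\tP{X\mid S}=V(m)$), while on $A(m)$ the code satisfies \eqref{e9a} with $\ep$ replaced by $\ep^{\ast}$.

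Then I would make the conditional type common to all messages by one more pigeonhole, this time over $\cM$: since $V(m)$ takes at most $(n+1)^{\|\cS\|\|\cX\|}$ values, there are a conditional type $\tP{X\mid S}$ and a set $\cM'\subseteq\cM$ with $V(m)=\tP{X\mid S}$ for all $m\in\cM'$ and $\|\cM'\|\ge M_n(n+1)^{-\|\cS\|\|\cX\|}$. Restricting $f_n$ to $\cM'\times\cS^n$ and keeping $\phi_n$, whose decoding sets $\phi_n^{-1}(m)$, $m\in\cM'$, remain disjoint, gives an $(\|\cM'\|,n)$-code that meets every hypothesis of Lemma~\ref{l_KL} with parameters $\ep^{\ast},\tau$, pmfs $\bPP{S},\tP{X\mid S}$, and the sets $A(m)$, $m\in\cM'$. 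The lemma then yields, for $n\ge N(\|\cX\|,\|\cS\|,\|\cY\|,\tau,\ep)$,
\[
\frac{1}{n}\log\|\cM'\| \le I(U\wedge Y)-I(U\wedge S)
\]
for some $\bPP{USXY}\in\cP(\bPP{S}\tP{X\mid S},W)\subseteq\cP$, so the right-hand side is at most $C$. Therefore $\frac{1}{n}\log M_n\le C+\frac{\|\cS\|\|\cX\|}{n}\log(n+1)$, and letting $n\to\infty$ gives $\limsup_n\frac{1}{n}\log M_n\le C$.

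The substance of the argument is in Lemma~\ref{l_KL}, which is assumed here; granting it, the steps of the reduction that require care are: (i) converting the ``$\bPP{S}$-measure at least an inverse polynomial'' estimate on $A(m)$ into the \emph{cardinality} lower bound \eqref{e7}, which works only because $A(m)$ lies inside the typical set $\cT$, so all of its state sequences carry essentially the same probability; (ii) the two nested pigeonhole steps---first over joint types on $\cS\times\cX$ for each fixed message, then over messages---needed to isolate a single conditional type $\tP{X\mid S}$ to feed the lemma; and (iii) the use of the inflated error threshold $\ep^{\ast}=(1+\ep)/2$ instead of $\ep$ itself, which keeps $\ep^{\ast}+\tau<1$ attainable and so makes the argument valid for $\ep$ arbitrarily close to $1$.
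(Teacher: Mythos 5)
Your proposal is correct and takes essentially the same route as the paper: fix $\tP{S}=\bPP{S}$, use Markov's inequality to obtain high-probability message-dependent subsets $\hat A(m)$ of the typical set on which the conditional error probability is at most $\ep^{\ast}=(1+\ep)/2$, convert the $\bPP{S}$-measure lower bound to a cardinality lower bound via typicality, and then apply two pigeonhole steps (over conditional types within each message, then over messages) to obtain the hypotheses of Lemma~\ref{l_KL} for a subcode $\cM'$ of only polynomially smaller size. The paper's proof is a slightly less explicit version of exactly this reduction.
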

\vspace{0.5cm}

\begin{theorem}\label{t_SP}
({\it Sphere Packing Bound}) Given $\delta>0$, for $0< R < C$, it holds that
\begin{align}\nonumber
E(R) \leq E_{SP}(1+\delta) + \delta,
\end{align}
where
\begin{align}\label{e10}
E_{SP} =  \min_{\tP{S}}\max_{\tP{X\mid S}}\min_{V\in
\cV(R,\tP{S}\tP{X\mid S})} &\big[D(\tP{S}\|\bPP{S}) \\\nonumber&+
D(V\|W\mid \tP{S}\tP{X\mid S} )\big]
\end{align}
with
\begin{align}\nonumber
\mathcal{V}(R,\tP{SX}) = \big\{&V:\cX\times\cS \rightarrow \cY:
\\\nonumber& \max_{P_{USXY}\in \mathcal{P}(\tP{SX},V)} I(U \wedge Y) - I(U \wedge S)
< R \big\}.
\end{align}
\end{theorem}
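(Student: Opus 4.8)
The plan is to reduce Theorem~\ref{t_SP} to Lemma~\ref{l_KL} by the classical device of comparing a good code on the true channel $W$ with its behaviour on a judiciously chosen ``dummy'' channel $V$, the comparison being effected by a change of measure on conditionally typical sequences. Fix $\delta>0$ and pick $\ep,\tau>0$ small with $\ep+\tau<1$. By definition of $E(R)$, for all large $n$ there is an $(M,n)$-code $(f,\phi)$ with $\frac1n\log M>R-\delta$ and $e(f,\phi)\le\exp[-n(E(R)-\delta)]$. Let $\tP{S}$ be a type on $\cS^n$ approximating the outer minimizer in \eqref{e10}. The first task is to extract a structured sub-code: for each message $m$, the set $\cT^n_{[\tP{S}]}$ partitions according to the (polynomially many) conditional types of $f(m,\cdot)$ given the state sequence, so some such type $\tP{X\mid S}$ makes the set $A(m)$ of those $\bs\in\cT^n_{[\tP{S}]}$ with $f(m,\bs)\in\cT^n_{[\tP{X\mid S}]}(\bs)$ satisfy $\|A(m)\|\ge\|\cT^n_{[\tP{S}]}\|/\mathrm{poly}(n)\ge\exp[n(H(\tP{S})-\tau/6)]$ for $n$ large. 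Pigeonholing once more over the conditional types yields a single $\tP{X\mid S}$ and a message set $\cM'\subseteq\cM$ with $\frac1n\log\|\cM'\|>R-\delta-o(1)$ for which the sets $A(m)$, $m\in\cM'$, all share this $\tP{X\mid S}$ and therefore satisfy \eqref{e6}, \eqref{e7} and \eqref{e8}.

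I would then invoke Lemma~\ref{l_KL} with the dummy channel $V$ playing the role of $W$: it asserts that any code obeying \eqref{e6}--\eqref{e8} together with \eqref{e9b} has rate at most $\max_{P_{USXY}\in\mathcal{P}(\tP{S}\tP{X\mid S},V)}\big[I(U\wedge Y)-I(U\wedge S)\big]$. Choose $V$ so that this maximum is strictly below $\frac1n\log\|\cM'\|$ (such $V$ exist, e.g.\ sufficiently noisy ones) and, among those, so that $D(V\|W\mid\tP{S}\tP{X\mid S})$ is minimal. For this $V$ the conclusion of Lemma~\ref{l_KL} must fail; since \eqref{e6}--\eqref{e8} hold, \eqref{e9b} fails for some $m^{*}\in\cM'$, i.e.
\begin{align}\nonumber
\frac{1}{\|A(m^{*})\|}\sum_{\bs\in A(m^{*})}V^n\!\big((\phi^{-1}(m^{*}))^{c}\mid f(m^{*},\bs),\bs\big)>\ep .
\end{align}
By Markov's inequality there is $A'(m^{*})\subseteq A(m^{*})$ with $\|A'(m^{*})\|\ge(\ep/2)\|A(m^{*})\|$ on which the $V^n$-probability of decoding error is at least $\ep/2$.

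Now pull this back to $W$. For $\bs\in A'(m^{*})$ the pair $(f(m^{*},\bs),\bs)$ has joint type (up to the typicality slack) equal to $\tP{S}\tP{X\mid S}$ by \eqref{e8}; intersecting the decoding-error event with the set of channel outputs that are $V$-conditionally typical given $(f(m^{*},\bs),\bs)$ loses only $o(1)$ in $V^n$-probability, and on that set
\begin{align}\nonumber
W^n(\mathbf{y}\mid f(m^{*},\bs),\bs)\ge\exp\!\big[-n\big(D(V\|W\mid\tP{S}\tP{X\mid S})+\gamma\big)\big]\,V^n(\mathbf{y}\mid f(m^{*},\bs),\bs)
\end{align}
for any $\gamma>0$ and $n$ large. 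Hence the $W^n$-probability of decoding error for $m^{*}$ at each such $\bs$ is at least $(\ep/4)\exp[-n(D(V\|W\mid\tP{S}\tP{X\mid S})+\gamma)]$. Summing over $\bs\in A'(m^{*})$ against the product measure $\bPP{S}$, using $\bPP{S}(\bs)\ge\exp[-n(H(\tP{S})+D(\tP{S}\|\bPP{S})+\gamma)]$ on $\cT^n_{[\tP{S}]}$, \eqref{e7}, and $\|A'(m^{*})\|\ge(\ep/2)\|A(m^{*})\|$, the entropy terms cancel and
\begin{align}\nonumber
e(f,\phi)\ge\exp\!\big[-n\big(D(\tP{S}\|\bPP{S})+D(V\|W\mid\tP{S}\tP{X\mid S})+\tfrac{\tau}{6}+2\gamma+o(1)\big)\big].
\end{align}
Comparing with $e(f,\phi)\le\exp[-n(E(R)-\delta)]$, letting $n\to\infty$ and then $\ep,\tau,\gamma\downarrow0$, taking the worst case over the forced $\tP{X\mid S}$ (the $\max$ in \eqref{e10}) while recalling that $\tP{S}$ and $V$ were chosen optimally (the two $\min$'s), and invoking standard continuity and closure properties of $D(V\|W\mid\cdot)$ and of the sets $\cV(\cdot,\cdot)$, to pass from conditional types to arbitrary conditional pmfs and to absorb the $o(1)$ rate deficit, gives $E(R)\le E_{SP}(1+\delta)+\delta$.

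Given Lemma~\ref{l_KL}, the genuinely nonroutine points are two. First, the extraction in the opening paragraph: producing, \emph{simultaneously for every message}, a large \emph{message-dependent} subset of the typical state set on which the codeword has a common conditional type --- exactly the structure Lemma~\ref{l_KL} is built to consume. Second, the final reconciliation: matching the quantifier order $\min_{\tP{S}}\max_{\tP{X\mid S}}\min_V$ in \eqref{e10} to the logic of the argument (we pick $\tP{S}$ and $V$, while the code, in effect, picks $\tP{X\mid S}$), and discharging the continuity and closure arguments that convert the blocklength-dependent types and the accumulated $\delta,\tau,\gamma,o(1)$ slack into the clean stated bound. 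I expect that reconciliation --- rather than any individual inequality --- to be the main obstacle.
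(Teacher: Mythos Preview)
Your proposal is correct and follows essentially the same route as the paper: extract a sub-code $\cM'$ and message-dependent sets $A(m)$ satisfying \eqref{e6}--\eqref{e8} by pigeonholing over conditional types, apply Lemma~\ref{l_KL} contrapositively with a dummy channel $V\in\cV(R,\tP{S}\tP{X\mid S})$ to force a message with large $V^n$-error averaged over $A(m)$, then change measure from $V$ to $W$ and weight by $\bPP{S}$. The only notable differences are cosmetic: the paper invokes Lemma~\ref{l_KL} with $\ep$ close to $1$ (so the $V^n$-error is $\ge 1-\delta'$ rather than merely $\ge\ep$) and performs the change of measure via the divergence inequality \cite[Theorem 2.5.3]{CsiKor81} instead of your conditional-typicality argument, which is why the paper's final bound carries the multiplicative $(1+\delta)$ while yours has only additive slack.
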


\noindent
\begin{remark}
For the case when the receiver, too, possesses (full) CSI, the sphere
packing bound above coincides with that obtained earlier in \cite{Har01} for this case.
\end{remark}
\begin{remark}
In (\ref{e10}), the terms $D(\tP{S}\|\bPP{S})$ and $D(V\|W\mid
\tP{S}\tP{X\mid S})$ account, respectively, for the shortcomings
of a given code for corresponding ``bad" state pmf and ``bad"
channel.
\end{remark}
\section{Proofs of Results}
We provide below the proofs of Lemma \ref{l_KL} and Theorems \ref{t_SC} and \ref{t_SP}.

\vspace{0.1in}
\noindent
\begin{proof}[Proof of Lemma \ref{l_KL}] 

Our proof below is for the case when (\ref{e9a}) holds; the case
when (\ref{e9b}) holds can be proved similarly with minor
modifications. Specifically, in the latter case, we can find
subsets $A'(m)$ of $A(m)$, $m \in \cM$, that satisfy
(\ref{e6})-(\ref{e8}) and (\ref{e9a}) for some $\epsilon', \tau'
0$ with $\epsilon' + \tau' < 1$ for all $n$ sufficiently large.

Set
\begin{align}\nonumber
B(m) = \{(f(m,\bs),\bs) \in \cX^n\times\cS^n: \bs \in A(m) \},\, m\in \cM.
\end{align}
Let $\tP{Y} = \tP{SX} \circ W$ be a pmf on $\cY$ defined by
\begin{align}\nonumber
\tP{Y}(y)=  \sum_{s,x} \tP{SX}(s,x)W(y\mid x,s),\,  \ y\in \cY.
\end{align}
Consequently,
\begin{align}\label{e10a}
W^n(\cT^n_{[\tP{Y}]}\mid f(m,\bs),\bs)> \ep +
\tau,\,\,\,\,\,\,\,\,\,\bs\in A(m),
\end{align}
for all $n \geq N(\|\cX\|,|\cS\|,|\cY\|,\tau,\ep)$ (not depending on $m$
and $\bs$ in $A(m)$). Denoting
\begin{align}\nonumber
C(m) = \phi^{-1}(m)\cap \cT^n_{[\tP{Y}]},
\end{align}
we see from (\ref{e9a}) and (\ref{e10a}) that
\begin{align}\nonumber
W^n(C(m)\mid f(m,\bs),\bs) > \tau >0,\,\,\,\,\,\,\,\, (f(m,\bs),\bs)\in B(m),
\end{align}
so that
\begin{align}\nonumber
\|C(m)\| \geq g_{W^n}(B(m),\tau),
\end{align}
where $g_{W^n}(B(m),\tau)$ denotes the smallest cardinality of
a subset $D$ of $\cY^n$ with
\begin{align}\label{e10b}
W^n(D\mid (f(m,\bs),\bs)) > \tau,\,\,\,\,\,\,\,\,(f(m,\bs),\bs)\in
B(m).
\end{align}
With $m_0 = \mathtt{arg} \min_{1\leq m \leq M} \|C(m)\|$, we have
\begin{align}\nonumber
M\|C(m_0)\| \leq \sum_{m = 1}^{M}\|C(m)\| = \|\cT^n_{[\tP{Y}]}\|
\leq \exp{n\bigg(H(\tP{Y}) + \frac{\tau}{6}\bigg)}.
\end{align}
Consequently,
\begin{align}\label{e11}
\frac{1}{n}\log M \leq H(\tP{Y}) + \frac{\tau}{6} -
\frac{1}{n}\log \mathtt{g_{W^n}}(B(m_0),\tau).
\end{align}
Define a stochastic matrix $V: \cX\times \cS \rightarrow \cS$  with
\begin{align}\nonumber
V(s'\mid x,s) = \mathbf{1}(s' = s),
\end{align}
and let $\mathtt{g_{V^n}}$ be defined in a manner analogous to
$\mathtt{g_{W^n}}$ above with $\cS^n$ in the role of $\cY^n$ in
(\ref{e10b}). For any $m \in \cM$ and subset $E$ of $\cS^n$,
observe that
\begin{align}\nonumber
V^n(E\mid f(m,\bs),\bs) = \mathbf{1}(s \in E),\,\,\,\,\,\,\,\, \bs
\in \cS^n.
\end{align}
In particular, if $E$ satisfies
\begin{align}\label{e10c}
V^n(E\mid f(m,\bs),\bs) > \tau,\,\,\,\,\,\,\,\, \bs \in A(m),
\end{align}
it must be that $A(m) \subseteq E$, and since $E = A(m)$ satisfies
(\ref{e10c}), we get that
\begin{align}\label{e12}
\|A(m)\| = \mathtt{g_{V^n}}(B(m),\tau)
\end{align}
using the definition of $B(m)$. Using the image size
characterization \cite[Theorem 3.3.11]{CsiKor81}, there exists an
auxiliary rv $U$ and associated pmf $\bPP{USXY} = \bPP{U\mid
SX}\tP{SX}W$ such that
\begin{align}\label{e13}\nonumber
\left|\frac{1}{n}\log  \mathtt{g_{V^n}}(B(m_0),\tau) - H(S|U) - t\right| &< \frac{\tau}{6},\\
\left|\frac{1}{n}\log  \mathtt{g_{W^n}}(B(m_0),\tau) - H(Y|U) -
t\right| &< \frac{\tau}{6},
\end{align}
where $0\leq t \leq\min\{I(U \wedge Y),I(U \wedge S)\}$. Then,
using (\ref{e11}), (\ref{e12}), (\ref{e13}) we get
\begin{align}\nonumber
\frac{1}{n}\log M \leq I(U \wedge Y) + H(S\mid U) -
\frac{1}{n}\log \|A(m_0)\| + \frac{\tau}{2},
\end{align}
which by (\ref{e7}) yields
\begin{align}\nonumber
\frac{1}{n}\log M \leq I(U \wedge Y) - I(U \wedge S)  + \tau.
\end{align}
In (\ref{e13}), $\bPP{USXY}$ belongs to
$\mathcal{P}(\tP{S}\tP{X\mid S},W)$ but need not satisfy
(\ref{e3}). Finally, the asserted restriction to $\bPP{USXY} \in
\mathcal{P}(\tP{S}\tP{X\mid S},W)$ follows from the convexity of
$I(U\wedge Y) - I(U\wedge S)$ in $\bPP{X\mid US}$ for a fixed
$\bPP{US}$ (as observed in \cite{GelPin80}).
\end{proof}

\vspace{0.1in}
\noindent\begin{proof}[Proof of Theorem \ref{t_SC}]

Given $0<\ep<1$ and a $(M,n)$-code $(f,\phi)$ with $e(f,\phi)\leq
\ep$, the proof involves the identification of sets $A(m)$, $m \in
\cM$, satisfying (\ref{e6})-(\ref{e8}) and (\ref{e9a}). The
assertion then follows from Lemma \ref{l_KL}. Note that $e(f,\phi)
\leq \ep$ implies
\begin{align}\nonumber
\sum_{\bs\in \cS^n}\bP{S}{\bs}W^n(\phi^{-1}(m)\mid f(m,\bs), \bs) \geq 1- \ep
\end{align}
for all $m \in \cM$. Since $\bP{S}{\cT^n_{[\bPP{S}]}} \rightarrow 1$ as $n\rightarrow \infty$,
we get that for every $m \in \cM$,
\begin{align}\label{e14}\nonumber
\bP{S}{\bigg\{\bs\in \cT^n_{[\bPP{S}]} : W^n(\phi^{-1}(m)\mid f(m,\bs), \bs) > \frac{1- \ep}{2}\bigg\}} \\
\geq \frac{1-\ep}{3}
\end{align}
for all $n \geq N(\|\cS\|,\ep)$. Denoting the set $\{\cdot\}$
in (\ref{e14}) by $\hat{A}(m)$, clearly for every $m \in \cM$,
\begin{align}\nonumber
W^n(\phi^{-1}(m)\mid f(m,\bs),\bs)\geq
\frac{1-\ep}{2},\,\,\,\,\,\,\,\,\bs \in \hat{A}(m),
\end{align}
and
\begin{align}\nonumber
\bP{S}{\hat{A}(m)} \geq \frac{1- \ep}{3}
\end{align}
for $n \geq N(\|\cS\|,\ep)$, whereby for an arbitrary $\delta>0$, we get
\begin{align}\nonumber
\|\hat{A}(m)\|\geq \exp{[n(H(\bPP{S})-\delta)]}
\end{align}
for $n \geq N(\|\cS\|,\delta)$. Partitioning $\hat{A}(m)$, $m \in \cM$,
into sets according to the (polynomially many) conditional
types of $f(m,\bs)$ given $\bs$ in $\hat{A}(m)$, we obtain
a subset $A(m)$ of $\hat{A}(m)$ for which
\begin{align}
\nonumber f(m,\bs) &\in \cT^n_m(\bs),\,\,\,\,\,\,\,\,\bs\in A(m),\\
\nonumber \|A(m)\| &\geq\exp{[n(H(\bPP{S})-2\delta)]},
\end{align}
for $n\geq N(\|\cS\|,\|\cX\|,\delta)$, where $\cT^n_m(\bs)$
represents a set of those sequences in $\cX^n$ that have the
same conditional type (depending only on $m$).

Once again, the polynomial size of such conditional
types yields a subset $\cM'$ of $\cM$ such that $f(m,\bs)$
has a fixed conditional type (not depending on $m$)
given $\bs$ in $A(m)$, and with
\begin{align}\nonumber
\frac{1}{n}\log{\|\cM'\|} \geq \frac{1}{n}\log{M} - \delta
\end{align}
for all $n\geq N(\|\cS\|,\|\cX\|,\delta)$. Finally, the strong converse follows
by applying Lemma \ref{l_KL} to the subcode corresponding
to $\cM'$ and noting that $\delta>0$ is arbitrary.
\end{proof}

\vspace{0.1in}
\noindent
\begin{proof}[Proof of Theorem \ref{t_SP}]

Consider sequences of type $\tP{S}$ in $\cS^n$. Picking
$\hat{A}(m) = \cT^n_{\tP{S}}$, $m\in \cM$, in the proof of Theorem
\ref{t_SC}, and following the arguments therein to extract the
subset $A(m)$ of $\hat{A}(m)$, we have for a given $\delta>0$ that
for $n\geq N(\|\cS\|,\|\cX\|,\delta)$, there exists a subset
$\cM'$ of $\cM$ and a fixed conditional type, say $\tP{X\mid S}$
(not depending on $m$), such that for every $m \in \cM'$,
\begin{align}
\nonumber A(m) &\subseteq \hat{A}(m) = \cT^n_{\tP{S}},\\
\nonumber \|A(m)\|&\geq \exp{[n(H(\tP{S}) - \delta)]},\\
\nonumber f(m,\bs) &\in \cT^n_{\tP{X\mid S}}(\bs),\,\,\,\,\,\,\,\,\,\,\,\,\,\,\,\, \bs \in A(m),\\
\nonumber \frac{1}{n}\log{\|\cM'\|} &\geq R - \delta.
\end{align}
Then for every $V \in \cV(R,\tP{S}\tP{X\mid S})$, we obtain
using Lemma \ref{l_KL} (in its version with condition (\ref{e9b})),
that for every $\delta'>0$, there exists $m\in\cM'$
(possibly depending on $\delta'$ and $V$) with
\begin{align}
\nonumber \frac{1}{\|A(m)\|}\sum_{\bs\in A(m)}V^n((\phi^{-1}(m))^c\mid f(m,\bs),\bs) \geq 1 - \delta'
\end{align}
for all $n \geq N(\|\cS\|,\|\cX\|,\|\cY\|,\delta')$. For this
$m$, apply \cite[Theorem 2.5.3, (5.21)]{CsiKor81} with the choices
\begin{align}
\nonumber Z &= \cY^n\times A(m),\\
\nonumber S &= (\phi^{-1}(m))^c\times A(m),\\
\nonumber Q_1(\mathbf{y},\bs) &= \frac{V^n(\mathbf{y} \mid
f(m,\bs),\bs)}{\|A(m)\|},
\end{align}
\newpage
\begin{align}
\nonumber Q_2(\mathbf{y},\bs) &= \frac{W^n(\mathbf{y} \mid
f(m,\bs),\bs)}{\|A(m)\|},
\end{align}
for $(\mathbf{y},\bs) \in Z$, to obtain
\begin{align}
\nonumber \frac{1}{\|A(m)\|} \sum_{\bs\in A(m)}&W^n((\phi^{-1}(m))^c\mid f(m,\bs), \bs) \\
\nonumber &\geq \exp{\left[-\frac{nD(V\| W\mid \tP{X\mid S}\tP{S}) + 1}{1 - \delta'}\right]}.
\end{align}
Finally,
\begin{align}
\nonumber e(f,\phi)& \geq \sum_{\bs\in A(m)}\bP{S}{\bs}W^n({(\phi^{-1}(m))}^{c}\mid f(m,\bs), \bs)
\end{align}
\begin{align}
\nonumber &\geq \exp[-n(D(\tP{S}\| \bPP{S}) \\
\nonumber&\qquad \qquad+ D(V\| W\mid \tP{X\mid S}\tP{S})(1 + \delta) + \delta)]
\end{align}
for $n \geq N(\|\cS\|,\|\cX\|,\|\cY\|,\delta,\delta')$,
whereby it follows that
\begin{align}
\nonumber\lim \sup_n&-\frac{1}{n}\log{e(f,\phi)}\\
\nonumber &\leq \min_{\tP{S}}\max_{\tP{X\mid S}}\min_{V\in \cV(R,\tP{S}\tP{X\mid S})}[D(\tP{S}\| \bPP{S}) \\
\nonumber&\qquad\qquad+ D(V\| W\mid \tP{X\mid S}\tP{S})(1 + \delta) + \delta]
\end{align}
for every $\delta>0$.
\end{proof}

\vspace{0.1in}
\section*{Acknowledgements}

The authors thank Shlomo Shamai for helpful comments.

The authors' work was supported by the U.S. National Science
Foundation under Grant ECS0636613.


\vspace{0.1in}

\end{document}